\newtheorem{theorem}{Theorem}
\newtheorem{lemma}{Lemma}
\newtheorem{definition}{Definition}
\newtheorem{corollary}{Corollary}
\newtheorem{proposition}{Proposition}
\begin{document}

\begin{frontmatter}



\title{Relating counting complexity to non-uniform probability measures}


\author{Eleni Bakali}

\address{National Technical University of Athens, \\
School of Electrical and Computer Engineering, 
Department of Computer Science, \\
Computation and Reasoning Laboratory (CoReLab, room 1.1.3),\\ 
9 Iroon Polytechniou St, Polytechnioupoli Zografou,  157 80,  Athens, Greece.}

\ead{mpakalitsa[at]gmail.com}

\begin{abstract}
 A standard method for designing randomized algorithms to approximately count the number of solutions of a problem in \#P, is by constructing a rapidly mixing Markov chain converging to the uniform distribution over this set of solutions. This construction is not always an easy task, and it is conjectured that it is not always possible. We want to investigate other possibilities for using Markov Chains in relation to counting, and whether we can relate algorithmic counting to other, non-uniform, probability distributions over the set we want to count. 

In this paper we present a family of probability distributions over the set of solutions of a problem in TotP, and show how they relate to counting; counting is equivalent to computing their normalizing factors. 
We analyse the complexity of sampling, of computing the normalizing factor, and of computing the size support of these distributions. The latter is also equivalent to counting. We also show how the above tasks relate to each other, and to other problems in complexity theory as well. 

In short,  we prove that sampling and approximating the normalizing factor is easy. We do this by constructing a family of rapidly mixing Markov chains for which these distributions are stationary.  At the same time we show that exactly computing the normalizing factor is TotP-hard. However the reduction proving the latter is not approximation preserving, which conforms with the fact that TotP-hard problems are inapproximable if NP $\neq$ RP.

The problem we consider is the Size-of-Subtree, a TotP-complete problem under parsimonious reductions. Therefore the results presented here extend to any problem in TotP. TotP is the Karp-closure of self-reducible problems in \#P, having decision version in P.
\end{abstract}

\begin{keyword}
computational complexity \sep  randomized algorithms \sep counting complexity \sep sampling \sep Markov chains

\end{keyword}

\end{frontmatter}

\section{Introduction}

The set of all self-reducible counting problems in \#P, having decision version in P, is contained in a complexity class called TotP \cite{PZ06}. E.g. \#DNF, \#IS, and the Permanent belong to TotP. 
TotP is a proper subclass of \#P unless P=NP \cite{PZ06}.
Regarding approximability, TotP admits  FPRAS if and only if  NP=RP (\cite{DFJ02} for the one direction, and theorem \ref{theorem2} for the other).

We are (I am) in particularly interested  in understanding whether NP=RP, through the lens of counting complexity. In other words we are interested in understanding whether there exist counting problems in TotP, that are hard to approximate, and if so, what the reason for this difficulty is.

More specifically in this paper, the motivation derives from a theorem \cite{JS89} asserting that approximate counting is \textit{equivalent} to\textit{ uniform} sampling.  Therefore counting is most often performed via uniform sampling. Sampling in turn is usually performed by designing a rapidly mixing Markov chain, having as stationary distribution the uniform  over the set of solutions to the problem at hand. This is a special case of the Markov Chain Monte Carlo method (MCMC), applied to counting problems.

In order to study the possibility of approximating the problems in TotP, one can focus on any TotP-complete problem, under parsimonious reductions (i.e. reductions that preserve the value of the function, and thus also preserve approximations). Recently we found some such complete problems \cite{BCPPZ17}, and one of them is the Size-Of-Subtree problem, also known as the problem of estimating the size of a backtracking-tree \cite{ K74, Stockmeyer85a}. This problem asks for an estimation of the size of some sub-tree of the perfect binary tree, with height $n$, given in some succinct (polynomial in $n$) representation. 

We first observe that uniform sampling over the nodes of a tree can be performed by a simple random walk on the tree, but its mixing time is quadratic to the size of the tree, and thus in worst case exponential in $n$. This holds unconditionally, i.e. independently of whether NP = RP. However it might be possible to perform uniform sampling differently (open question). 

A different direction, which we do follow in this paper, is to explore whether it is possible to associate counting in TotP to other probability measures over the set of solutions we want to count, (we mean other than the uniform). In particular, we want to investigate whether counting can be related to computational tasks that concern these probability measures, as sampling, estimating the normalizing factor, and estimating the support.  For example, it was natural (for me) to wonder whether there exist a rapidly mixing Markov chain over the nodes of a tree, such that its stationary distribution can be somehow related to exactly or approximately computing the size of the tree. 

In this paper, in short, it is shown that there indeed exists a family of probability distributions related to counting in TotP, in the sense that counting in TotP is equivalent to computing the normalizing factors of these distributions.
We analyse the complexity of the above mentioned three tasks. We also show how these three tasks relate to each other and to other problems in complexity theory.
We discuss the results in detail later.

Before proceeding to the presentation of the  results and proofs, we would like to provide an indication that it might be easier and more fruitful to study problems in TotP, rather than in \#P, and in particular to try to use Markov chains for them, and study their stationary distributions.

Take for example \#SAT. 
From the study of random SAT it is known that, for  formulas considered hard, the satisfying assignments are widely scattered in the space of all assignments \cite{A09,AR09,ACR10}. 
That is,  the solutions form clusters  s.t. it is not only hard to find even one solution, but even if you are given one solution in some cluster, it is hard to find another one in a different cluster. The solutions doesn't seem to relate to each other in any algorithmically tractable way.
On the contrary, for the TotP-complete version of \#SAT (see \cite{BCPPZ17} for details), the situation is completely different. The solutions of an input formula build an easy neighbourhood structure, in particular a tree of polynomial height, s.t. from any solution, it is easy to move to its neighbouring solutions. 
This property of connectedness seems to be a main difference between problems in TotP and \#P. In fact it can be generalized to any problem in TotP. 

From an algorithmic point of view, connectedness is important, because it allows us to easily construct irreducible Markov chains. Moreover, from the hardness point of view,
the observation of scattered solutions can be a reasonable explanation for the failure of many algorithmic approaches for \#SAT, but not for its TotP-complete version. 

Of course \#P admits an FPRAS if and only if TotP admits an FPRAS (as we can see from the discussion in the beginning of this section), however at first glance  it might be easier to design an approximation algorithm for a TotP-complete problem, than it is for a \#P-complete problem, if such an algorithm exists. If on the other hand NP $\neq$ RP, and thus FPRAS is impossible for both \#P and TotP, new insights and explanations are needed, that apply to TotP, not only to \#P. 

\section{Main Results}
In this paper a family of probability distributions $(\pi_S)_{_S}$ defined on the set of nodes of finite binary trees is presented, and the complexity of sampling, computing the normalizing factor, and computing the size of the support is studied. Time complexity is considered with respect to the height $n$ of the corresponding tree. 

The family of distributions $(\pi_S)_{_S}$ is defined as follows. 

\begin{definition}Let $S$ be a binary tree of height $n$ and let $V(S)$ be the set of nodes of $S.$ For all $u \in V(S)$ $\pi_S(u)=\alpha \cdot 2^{n-i},$ where $i$ is the depth of node $u$ and $\alpha$ is the normalizing factor of $\pi_S$, that is a constant s.t. $\sum_{u\in V(S)} \pi_S(u)=1.$  
\label{def-PS}\end{definition}

The following are shown. 

\begin{theorem}
\begin{enumerate}
\item For every distribution in this family, there is a Markov chain with polynomial mixing time, having this distribution as stationary.
\item For every distribution in this family, sampling with respect to this distribution is possible in randomized polynomial time, using the above Markov Chain.
\item Computing the normalizing factor of any distribution in this family 
	\begin{enumerate}
	\item is TotP-hard under Turing reductions,
	\item approximation is possible with FPRAS, using sampling,
	\item exact computation is impossible deterministically (or respectively probabilistically) if NP $\neq$ P (or respectively NP $\neq$ RP).
	\end{enumerate}
\item Computing the size of the support of any distribution in this family
	\begin{enumerate}
	\item \label{4a} is TotP-hard under parsimonious reductions,
	\item \label{4b} reduces to exactly computing the normalizing factors,
	\item additive error approximation (see def. \ref{additive definition}) is possible in randomized polynomial time, (note that the size of the support is in general exponential in $n$),
	\item exact computation is impossible deterministically (or respectively probabilistically) if NP $\neq$ P (or respectively NP $\neq$ RP),
	\item multiplicative polynomial factor approximation is impossible deterministically (or respectively probabilistically) iff NP $\neq$ P (or respectively NP $\neq$ RP),
	\item \label{4last} additive error approximation can also be achieved in randomized polynomial time, by reducing it to approximately computing normalizing factors for any distribution in the family.
	\end{enumerate}
\end{enumerate} 
\label{main-theorem}\end{theorem}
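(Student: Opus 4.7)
My plan is to address the four items of the theorem in order. The real technical work is item~1, where the rapidly mixing Markov chain is constructed and analysed; once that is in hand, items 2--4 reduce to a handful of elementary algebraic identities together with Theorem~\ref{theorem2} and the TotP-inapproximability facts discussed in the introduction.

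For item~1, I would take the lazy walk on $V(S)$ in which every non-root node moves to its parent with probability $1/2$, to each child lying in $S$ with probability $1/4$, and self-loops with the remaining probability (the root's parent move becoming an additional self-loop). Reversibility against $\pi_S$ is immediate: for any parent--child pair $(u,v)$ with $v$ at depth $i+1$, $\pi_S(u)\cdot\frac{1}{4}=\alpha\,2^{n-i-2}=\pi_S(v)\cdot\frac{1}{2}$. For the mixing time I would use canonical paths along tree geodesics (of length $\le 2n$). For the edge between a node $x$ at depth $i$ and its parent, the congestion is controlled by the ratio $\pi_S(T_x)/\pi_S(x)$, where $T_x$ is the subtree of $S$ rooted at $x$; since $T_x$ has at most $2^j$ nodes at relative depth $j$, each of weight $\alpha\,2^{n-i-j}$, this ratio is at most $n+1$, giving congestion $O(n^2)$ and hence polynomial mixing time. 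This is the delicate step, because the exponential spread of $\pi_S$ across depths could a priori force exponential mixing; the $1/2$ versus $1/4$ bias is precisely the factor needed to cancel the geometric weights. Item~2 is then immediate by simulating the chain from the root for a mixing time.

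For item~3 the key identity is $\pi_S(\mathrm{root})=2^n/Z_S$, where $Z_S=\alpha^{-1}=\sum_{u\in V(S)}2^{n-i(u)}$. Since $S$ lies inside the complete binary tree of height $n$ we have $Z_S\le(n+1)2^n$, whence $\pi_S(\mathrm{root})\ge 1/(n+1)$; drawing $O(n/\varepsilon^2)$ samples via item~2 and using the empirical frequency of the root as an estimator produces an FPRAS for $Z_S$, proving 3b. For 3a, let $S^{(k)}$ denote $S$ truncated at depth $k$ but still viewed as a subtree of height $n$; then $Z_{S^{(k)}}-Z_{S^{(k-1)}}=2^{n-k}N_k$, where $N_k$ is the number of nodes of $S$ at depth exactly $k$, which Turing-reduces $|V(S)|=\sum_k N_k$ to exact values of $Z_{S^{(k)}}$. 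Item~3c then follows because an exact deterministic (respectively randomized) polynomial-time algorithm for $Z_S$ would put the TotP-complete Size-of-Subtree, and hence the Permanent, into deterministic (respectively randomized) polynomial time, forcing $\mathrm{P}=\mathrm{NP}$ (respectively $\mathrm{RP}=\mathrm{NP}$) via Toda's theorem.

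Item~4 is largely a repackaging of what is already established. Part 4a is immediate since the support of $\pi_S$ is $V(S)$ and Size-of-Subtree is parsimoniously TotP-complete, and 4b is the Turing reduction already used for 3a. For 4c, sample uniformly from the complete binary tree $T_n$ of height $n$, check membership in $V(S)$ via the succinct representation, and return the empirical fraction times $|V(T_n)|=2^{n+1}-1$; a Chernoff bound gives the claimed additive-error guarantee. Parts 4d and 4e then follow from Theorem~\ref{theorem2} and the introductory discussion: exact computation or multiplicative polynomial-factor approximation of a TotP-complete quantity contradicts $\mathrm{NP}\ne\mathrm{P}$ (respectively $\mathrm{NP}\ne\mathrm{RP}$), while the converse direction of 4e is trivial since $\mathrm{NP}=\mathrm{P}$ (respectively $\mathrm{NP}=\mathrm{RP}$) collapses all of TotP into deterministic (respectively randomized) polynomial time. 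Finally for 4f, applying the FPRAS of 3b to each $Z_{S^{(k)}}$ with relative error $\varepsilon/\mathrm{poly}(n)$ and substituting into $N_k=(Z_{S^{(k)}}-Z_{S^{(k-1)}})/2^{n-k}$ propagates to an additive error of order $\varepsilon\cdot 2^n$ on $|V(S)|$, matching the scale of additive approximation.
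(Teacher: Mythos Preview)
Your proposal is correct and follows the same overall architecture as the paper: the same biased walk on $S$, the same estimator $\pi_S(\mathrm{root})=2^n/Z_S\ge 1/(n+1)$ for the FPRAS, the same Turing reduction through truncated trees for 3a/4b, and the same appeal to TotP-completeness of Size-of-Subtree for the hardness items.

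The one substantive technical divergence is the mixing-time argument. The paper bounds the \emph{conductance} directly, showing $\Phi\ge 1/(4(n+1))$ by a case analysis over subsets $Y\subseteq V(S)$ (subtrees not containing the root, unions thereof, and complements), and then invokes the Jerrum--Sinclair bound to get $\tau=O(n^2\log n)$. Your canonical-paths argument along tree geodesics is equally valid and yields the same order of bound; your key estimate $\pi_S(T_x)/\pi_S(x)\le n+1$ is in fact exactly the inequality the paper uses inside its conductance computation, so the two analyses are dual views of the same combinatorial fact. Two small wrinkles worth tightening: (i) in your reduction for 3a you ``view $S^{(k)}$ as a subtree of height $n$'', but by Definition~\ref{def-PS} the distribution $\pi_{S^{(k)}}$ uses the intrinsic height $k$, so your $Z_{S^{(k)}}$ differs from $\alpha_{S_k}^{-1}$ by the known factor $2^{n-k}$ --- harmless for a Turing reduction, but the paper stays literally inside the family and obtains the cleaner identity $|S|=\alpha_{S_n}^{-1}-\sum_{k<n}\alpha_{S_k}^{-1}$; (ii) for the converse of 4e, $\mathrm{NP}=\mathrm{P}$ does \emph{not} collapse TotP into exact polynomial time, only into FPTAS (this is what the paper invokes, citing Arora--Barak and Theorem~\ref{theorem2}), which is all that 4e requires.
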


\subsection{Main ideas and techniques}

A simple random walk on a perfect binary tree mixes in exponential time w.r.t the height of the tree. An intuitive reason for this, is that from any internal node, the probability of going a level down is double the probability of going a level up (since it has got two children, but one parent). The idea is to construct a Markov chain s.t. the probability of going a level up equals the probability of going a level down. This Markov chain turns out to converge rapidly on the full complete binary tree, and as we prove, this also generalises to an arbitrary binary tree $S$ (not full or complete).

Of course this Markov chain does not converge to the uniform distribution over the set of nodes $V(S).$ However the normalizing factor of its stationary distribution contains enough information to compute $|V(S)|.$  More precisely, if you consider the pruned subtrees $S_0, S_1, ..., S_n,$ where each $S_i$ contains all nodes of $S$ up to depth $i,$  then the corresponding normalizing factors $\alpha_{S_0}, \alpha_{S_1}, ... \alpha_{S_i}$ can be used to iteratively count the number of nodes at level $i.$

Our techniques refer to the analysis of Markov chains. We prove polynomial mixing time, by bounding a quantity called conductance.

\subsection{Proofs}
The counting problem considered here is the Size-of-Subtree.
\begin{definition} Let $T$ be the perfect binary tree of height $n.$ Let $S$ be a subtree of height $n$ containing the root of $T$,  given in succinct representation, e.g. by a polynomial computable predicate $R_S:V(T)\rightarrow\{0,1\}$ s.t. $R_S(u)=1$ iff $u\in V(S),$ where $V(\cdot)$ denotes the set of vertices. The counting problem Size-of-Subtree is to compute the size of $V(S)$.
\end{definition}

This problem is TotP-complete under parsimonious reductions \cite{BCPPZ17}, so the results extend to any problem in TotP. For an arbitrary problem \#A in TotP, there is a tree $S$ s.t. its nodes are in one-to-one correspondence with the solutions of A. We will not get into that. For more details see \cite{BCPPZ17}.

In order to prove theorem \ref{main-theorem}, we will first prove some propositions.

We define a family of Markov chains $(P_S)_{_S},$ each having as states the nodes of a binary tree $S$.

\begin{definition}\label{the_chain} Let $S$ be a subtree of the perfect binary tree $T$ of height $n$, containing the root of $T$. We define the Markov chain $P_S$ over the nodes of $S$, with the following  transition probabilities. \\$p_S(i,j)=1/2$ if $j$ is the parent of $i$, \\$p_S(i,j)=1/4$ if $j$ is a child of $i$, \\$p_S(i,j)=0$ for every other $j\neq i$, and \\$p_S(i,i)=1-\sum_{j\neq i}p_S(i,j)$.
\end{definition}

\begin{proposition}\label{stationary}
The stationary distribution of $P_S$ is $\pi_S$ as defined in \ref{def-PS}.
\end{proposition}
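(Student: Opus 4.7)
The plan is to verify that $\pi_S$ satisfies the detailed balance (reversibility) equations with respect to $P_S$, namely $\pi_S(i)\,p_S(i,j) = \pi_S(j)\,p_S(j,i)$ for every pair $i,j \in V(S)$. Once detailed balance is established, it follows by a standard fact that $\pi_S$ is stationary: summing the equation over $i$ gives $\sum_i \pi_S(i)\,p_S(i,j) = \pi_S(j)\sum_i p_S(j,i) = \pi_S(j)$, which is exactly $\pi_S P_S = \pi_S$.

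Since $p_S(i,j) = 0$ whenever $i$ and $j$ are neither equal nor in a parent-child relationship, the only non-trivial cases to verify are: (a) $i = j$, which is automatic, and (b) $i,j$ are adjacent in $S$. For case (b), assume $j$ is the parent of $i$, and let $d$ be the depth of $i$ in $S$, so that the depth of $j$ is $d-1$. Then I would compute both sides directly from Definitions \ref{def-PS} and \ref{the_chain}:
\begin{equation*}
\pi_S(i)\,p_S(i,j) = \alpha\cdot 2^{n-d}\cdot \tfrac{1}{2} = \alpha\cdot 2^{n-d-1},
\end{equation*}
\begin{equation*}
\pi_S(j)\,p_S(j,i) = \alpha\cdot 2^{n-(d-1)}\cdot \tfrac{1}{4} = \alpha\cdot 2^{n-d-1}.
\end{equation*}
These are equal, so detailed balance holds for every edge of $S$.

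The constant $\alpha$ is chosen precisely so that $\sum_{u\in V(S)}\pi_S(u) = 1$, so $\pi_S$ is a genuine probability distribution; combined with the reversibility established above, this proves that $\pi_S$ is a stationary distribution of $P_S$. I do not anticipate a real obstacle here: the proof is a one-line verification enabled by the fact that the transition probabilities $1/2$ (up) and $1/4$ (down) are designed to exactly offset the factor-of-$2$ ratio between consecutive levels of $\pi_S$. The only subtle point worth mentioning explicitly is that the self-loop probabilities $p_S(i,i)$ (which depend on whether $i$ is a leaf or an internal node with one or two children in $S$) play no role in the detailed balance argument, so the calculation does not need to split into cases according to the local structure of $S$ at $i$.
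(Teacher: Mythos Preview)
Your proposal is correct and essentially matches the paper's approach: the paper simply asserts that one can check $\sum_i \pi_S(i)\,p_S(i,j) = \pi_S(j)$, while you verify the stronger detailed balance condition $\pi_S(i)\,p_S(i,j) = \pi_S(j)\,p_S(j,i)$ edge by edge, which immediately yields the same conclusion. This is the same elementary verification, just carried out explicitly (and in fact the reversibility you establish is exactly what the paper uses later in the mixing-time argument).
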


\begin{proof}
It is easy to check that $\sum_{i}\pi_S(i)p_S(i,j)=\pi_S(j).$ 
\end{proof}

\textbf{Note} For simplicity of notations from now on we will assume that S is fixed and  omit it from the subscripts in $P_S, p_S, \pi_S$.

Now we will prove that $P$ is rapidly mixing, i.e. mixes in time polynomial in the height of the tree $S$.  We will use the following  lemma from \cite{JS89}. 

Let $\{X_t\}_{t\geq 0}$ be a Markov chain over a finite state space $\cal{X}$ with transition probabilities $p_{ij}$. Let $p_x^{(t)}$ be the distribution of $X_t$ when starting from state $x$. Let $\pi$ be the stationary distribution,  and let $\tau_x(\epsilon)=\min\{t:||p_x^{(t)}-\pi ||\leq\epsilon\}$ be the mixing time when starting from state $x$. An ergodic Markov chain is called time reversible if $\forall i,j\in{\cal X}, p_{ij}\pi_i=p_{ji}\pi_j $. Let $H$ be the underlying graph of the chain, for which we have an edge with weight $w_{ij}=p_{ij}\pi_i=p_{ji}\pi_j$ for each $i,j\in\cal{X}$. A Markov chain is called lazy if $\forall i\in{\cal X},p_{ii}\geq \frac{1}{2} .$ In \cite{JS89} the conductance of a time reversible Markov chain is defined, as follows: $\Phi(H)=\min\frac{\sum_{i\in Y,j\notin Y}w_{ij}}{\sum_{i \in Y}\pi_{i}}$, where the minimum is taken over all $Y\subseteq {\cal X}$ s.t. $0<\sum_{i\in Y}\pi_i\leq\frac{1}{2}.$

\begin{lemma}
\cite{JS89} For any lazy, time reversible Markov chain \[\tau_{x}(\epsilon)\leq  const \times \left[ \frac{1}{\Phi(H)^2}(\log\pi_x^{-1}+\log\epsilon^{-1}) \right].\]
\end{lemma}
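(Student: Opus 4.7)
The plan is the classical spectral approach: analyze the spectrum of $P$ via reversibility and laziness, bound the total-variation distance by the spectral gap, and then connect the gap to the conductance through a discrete Cheeger inequality. Reversibility $p_{ij}\pi_i = p_{ji}\pi_j$ makes the matrix $M$ with entries $M_{ij} = \sqrt{\pi_i/\pi_j}\,p_{ij}$ symmetric and similar to $P$, so $P$ has real eigenvalues $1 = \lambda_1 > \lambda_2 \geq \cdots \geq \lambda_N$; laziness writes $P = \tfrac{1}{2}I + \tfrac{1}{2}Q$ for some stochastic $Q$, forcing $\lambda_N \geq 0$. I fix an orthonormal eigenbasis $\{f_k\}$ of $P$ in $L^2(\pi)$ with $f_1 \equiv 1$.

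Next I would expand $p_x^{(t)}/\pi - 1$ in this basis to obtain the $L^2(\pi)$ estimate
\[
\sum_y \pi_y\left(\frac{p_x^{(t)}(y)}{\pi_y}-1\right)^{2} = \sum_{k\geq 2}\lambda_k^{2t}\,|f_k(x)|^2 \leq \frac{\lambda_2^{2t}}{\pi_x},
\]
using the Parseval-type identity $\sum_k |f_k(x)|^2 = 1/\pi_x$. A Cauchy--Schwarz application then converts this into the total-variation bound $\|p_x^{(t)}-\pi\|_{TV} \leq \tfrac{1}{2}\lambda_2^{t}\pi_x^{-1/2}$. Setting the right-hand side equal to $\epsilon$ and using the elementary inequality $\log(1/\lambda_2) \geq 1-\lambda_2$ yields $\tau_x(\epsilon) = O\!\left(\frac{1}{1-\lambda_2}(\log\pi_x^{-1}+\log\epsilon^{-1})\right)$.

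The main obstacle is proving the discrete Cheeger inequality $1-\lambda_2 \geq \Phi(H)^2/2$, which translates the combinatorial quantity $\Phi$ into a spectral one. I would start from the Rayleigh-quotient characterization
\[
1-\lambda_2 = \min_{f:\,\sum_i \pi_i f(i)=0} \frac{\tfrac{1}{2}\sum_{i,j} w_{ij}(f(i)-f(j))^2}{\sum_i \pi_i f(i)^2},
\]
take a minimizing $f$, sort the states by $f$-value, and apply the co-area formula to express the numerator as an integral over level sets $Y_t = \{i: f(i)^2 > t\}$. Bounding each level set's edge boundary via the definition of $\Phi$, followed by a single Cauchy--Schwarz step, produces the characteristic quadratic loss, giving $\Phi^2/2 \leq 1-\lambda_2$. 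Substituting this into the bound of the previous paragraph yields the claimed $\tau_x(\epsilon) \leq \frac{const}{\Phi(H)^2}(\log\pi_x^{-1}+\log\epsilon^{-1})$; everything besides Cheeger's inequality is routine linear algebra once reversibility and laziness are exploited.
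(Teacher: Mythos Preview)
The paper does not prove this lemma at all: it is quoted verbatim as a result of Sinclair and Jerrum \cite{JS89}, with no argument supplied. Your sketch is the standard spectral proof (reversibility $\Rightarrow$ real spectrum in $L^2(\pi)$, laziness $\Rightarrow$ nonnegative eigenvalues, $L^2$-to-TV bound, then the discrete Cheeger inequality $1-\lambda_2 \ge \Phi^2/2$), which is precisely the route taken in the cited reference, so there is nothing to contrast.
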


\begin{proposition} \label{mixing time}
The mixing time of $P$, when starting from the root, is polynomial in the height of the tree $n$.
\end{proposition}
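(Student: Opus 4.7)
The plan is to invoke the conductance bound from \cite{JS89} quoted above, so the work reduces to (i) verifying the technical hypotheses (reversibility, laziness), (ii) showing $\log \pi_{\text{root}}^{-1}=O(\log n)$, and (iii) proving $\Phi(H) \geq 1/(2(n+1))$. Reversibility is immediate from the fact that detailed balance $p(i,j)\pi(i)=p(j,i)\pi(j)$ holds edge by edge: for $u$ at depth $d$ and its child $v$, both sides equal $\alpha\cdot 2^{n-d-2}$. Laziness fails at internal nodes with two children in $S$ (there $p(u,u)=0$), so I would replace $P$ by its lazy version $P'=(I+P)/2$, which has the same stationary distribution $\pi$ and whose conductance differs from that of $P$ by at most a factor $2$; the asymptotic polynomial bound is unaffected.

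For (ii), the key observation is the subtree mass bound: if $S_u$ is the subtree of $S$ rooted at $u$ at depth $d$, then
\[
\pi(S_u) \;=\; \alpha\sum_{j=d}^{n} m_j\cdot 2^{n-j} \;\leq\; \alpha\sum_{j=d}^{n} 2^{j-d}\cdot 2^{n-j} \;=\; (n-d+1)\cdot\pi(u),
\]
where $m_j\leq 2^{j-d}$ is the number of nodes of $S_u$ at depth $j$. Applied to the whole tree ($u=\text{root}$, $d=0$), this gives $1=\pi(S)\leq (n+1)\pi(\text{root})$, hence $\pi(\text{root})\geq 1/(n+1)$ and $\log\pi_{\text{root}}^{-1}\leq \log(n+1)$.

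For (iii), take any $Y\subseteq V(S)$ with $0<\pi(Y)\leq 1/2$. Decompose $Y$ into its maximal connected subtrees $T_1,T_2,\ldots$ of $S$. Each $T_k$ is rooted at some $u_k$, and unless $u_k$ is the root of $S$ the edge from $u_k$ to its parent crosses the cut and has weight $\pi(u_k)/2$. Combining with the subtree mass bound,
\[
\frac{w(\text{edge above }T_k)}{\pi(T_k)} \;\geq\; \frac{\pi(u_k)/2}{(n-d_k+1)\pi(u_k)} \;\geq\; \frac{1}{2(n+1)}.
\]
If the root of $S$ lies in $\bar Y$, summing over all $k$ yields $\sum_{i\in Y,j\notin Y} w_{ij}\geq \pi(Y)/(2(n+1))$. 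If instead the root lies in $Y$, I apply the same argument to the components of $\bar Y$: the total crossing weight is at least $\pi(\bar Y)/(2(n+1))\geq 1/(4(n+1))$, and since $\pi(Y)\leq 1/2\leq \pi(\bar Y)$, dividing again gives $1/(2(n+1))$. Hence $\Phi(H)\geq 1/(2(n+1))$.

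Plugging the bounds into the lemma yields $\tau_{\text{root}}(\epsilon) = O(n^{2}\log(n/\epsilon))$, which is polynomial in $n$. The main obstacle is the conductance calculation; the neat point that makes it work is that cuts in a tree factor through connected components, so the per-component ratio controlled by the subtree mass bound suffices, and the case when the root sits on the smaller side is handled by flipping to $\bar Y$.
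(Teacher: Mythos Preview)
Your proposal is correct and follows essentially the same approach as the paper: pass to the lazy chain, verify reversibility, bound $\pi_{\text{root}}^{-1}$ by $n+1$, and then lower-bound the conductance via the subtree mass estimate $\pi(S_u)\leq (n{-}d{+}1)\pi(u)$. The paper obtains the same $\Phi\geq 1/(4(n+1))$ (for the lazy chain) and the same $O(n^2\log n)$ mixing bound; your organization of the conductance argument---decomposing $Y$ into its maximal connected pieces and handling the ``root in $Y$'' case by flipping to $\bar Y$---is a slightly cleaner packaging of the paper's explicit case analysis, but the underlying computation is identical.
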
  
\begin{proof} First of all, we will consider the lazy version of the Markov chain, i.e. in every step, with probability $1/2$ we do nothing, and with probability $1/2$ we follow the rules as in definition \ref{the_chain}. The mixing time of $P$ is bounded by the mixing time of its lazy version. The stationary distribution is the same. The Markov chain is time reversible, and the underlying graph is a tree with edge weights $w_{uv}=\pi_u p_{uv}=2^i\alpha\times \frac{1}{8}=2^{i-3}\alpha,$ if we suppose that $u$ is the father of $v$ and $2^i\alpha$ is the probability $\pi_u$.

The quantity $\pi_{root}^{-1}$ is $O(n)$, as we will show in lemma \ref{propertiesOfP}. 

Now it suffices to show that $1/\Phi(H)$ is polynomial in $n$. 

Let ${\cal X}$ be the set of the nodes of $S$, i.e. the state space of the Markov chain $P$. We will consider all possible $Y\subseteq {\cal X}$ with $0\leq \pi(Y)\leq 1/2.$ We will bound the quantity $\frac{\sum_{i\in Y,j \notin Y}w_{ij}}{\sum_{i \in Y}\pi_{i}}.$

If $Y$ is connected and does not contain the root of $S$, then it is a subtree of $S$, with root let say u, and $\pi_u=\alpha 2^k$ for some $k\in \mathbb{N}.$  We have 
\[\sum_{i\in Y, j\notin Y} w_{ij}\geq w_{u,father(u)}=2^{k-2}\alpha.\]
Now let $Y'$ be the perfect binary tree with root $u$ and height the same as $Y$, i.e. $k$. We have
\[\sum_{i\in Y} \pi_i \leq \sum_{i\in Y'}\pi_i= \sum_{j=0}^{k}2^{k-j}\alpha\times 2^j=\]\[2^k(k+1)\alpha\leq 2^k(n+1)\alpha \] where this comes if we sum over the levels of the tree $Y'.$
So it holds
\[\frac{\sum w_{ij}}{\sum \pi_i}\geq \frac{2^{k-2}\alpha}{2^k(n+1)\alpha}=\frac{1}{4(n+1)}\]

If $Y$ is the union of two subtrees of $S$, not containing the root of $S$, and the root of the first is an ancestor of the second's root, then the same arguments hold, where now we take as $u$ the root of the first subtree.

If $Y$ is the union of $\lambda$ subtrees not containing the root of $S$, for which it holds that no one's root is an ancestor of any other's root, then we can prove a same bound as follows. Let $Y_1,...Y_{\lambda}$ be the subtrees, and let $k_1,k_2,...,k_{\lambda},$ be the respective exponents in the probabilities of the roots of them, in the stationary distribution. Then as before
\[\sum w_{ij}\geq 2^{k_1-2}\alpha+2^{k_2-2}\alpha+...+2^{k_{\lambda}-2}\alpha\] and
\[\sum_{i\in Y} \pi_{i}=\sum_{j=1...\lambda}\sum_{i\in Y_j} \pi_i\leq 2^{k_j}(n+1)\alpha\] thus
\[\frac{\sum w_{ij}}{\sum \pi_i}\geq \frac{\alpha\sum_{j=1...\lambda} 2^{k_j-2}}{(n+1)\alpha \sum_{j=1...\lambda} 2^{k_j}}=\frac{1}{4(n+1)}.\]

If $Y$ is a subtree of $S$ containing the root of $S$, then the complement of $Y$, i.e. $S\setminus Y$ is the union of $\lambda$ subtrees of the previous form. So if we let $Y_i,k_i$ be as before, then
\[\sum w_{ij}=\alpha\sum_{j=1...\lambda} 2^{k_j-2}\] and since from hypothesis $\pi(Y)\leq 1/2$, we have
\[\sum_{i\in Y}\pi_i\leq\sum_{i\in S\setminus Y} \pi_i\leq (n+1)\alpha\sum_{j=1...\lambda} 2^{k_j}\]
thus the same bound holds again.

Finally, similar arguments imply the same bound when $Y$ is an arbitrary subset of $S$ i.e. an arbitrary union of subtrees of $S$.

In total we have
$1/\Phi(H)\leq 4(n+1).$
\end{proof}

Note that this result implies mixing time $O(n^2 \log n).$ This agrees with the intuition that on the full binary tree, the mixing time should be as much as the mixing time of a simple random walk over the levels of the tree, i.e. over a chain of length $n$. The bound is looser only by a $\log n$ factor.
 
The following  lemma proves two properties of this Markov chain, needed for the proofs that will follow.

\begin{lemma}
Let $R$ be a binary tree of height $n$, and let $\alpha_R$ be the normalizing factor of the stationary distribution $\pi_{R}$ of the above Markov chain. It holds $\alpha_R^{-1}\leq (n+1)2^n,$ and $\pi_R(root)\geq \frac{1}{n+1}$
\label{propertiesOfP}
\end{lemma}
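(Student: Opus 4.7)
The plan is to use the explicit form of $\pi_R$ from Definition~\ref{def-PS} and control $\alpha_R^{-1}$ by the worst case tree, namely the perfect binary tree of height $n$.

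First, since $\pi_R$ is a probability distribution, the normalization condition gives
\[
\alpha_R^{-1} \;=\; \sum_{u \in V(R)} 2^{\,n-i(u)},
\]
where $i(u)$ denotes the depth of $u$ in $R$. I would group the sum by depth: if $\ell_i$ denotes the number of nodes of $R$ at depth $i$, then $\alpha_R^{-1} = \sum_{i=0}^{n} \ell_i \cdot 2^{n-i}$.

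The key observation is that $R$ is a subtree of the perfect binary tree of height $n$, so $\ell_i \leq 2^i$ for every $i$. Substituting,
\[
\alpha_R^{-1} \;\leq\; \sum_{i=0}^{n} 2^i \cdot 2^{n-i} \;=\; (n+1)\,2^n,
\]
which is the first inequality. For the second, I use the definition $\pi_R(\mathrm{root}) = \alpha_R \cdot 2^n$ together with the bound just proved:
\[
\pi_R(\mathrm{root}) \;=\; \alpha_R \cdot 2^n \;\geq\; \frac{2^n}{(n+1)\,2^n} \;=\; \frac{1}{n+1}.
\]

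There is essentially no obstacle here: both bounds are attained (up to a constant) by the perfect binary tree, and the argument reduces to summing a geometric-type expression level by level. The only thing to be slightly careful about is the convention that the height $n$ makes $n+1$ levels; this is what gives the $(n+1)$ factor and hence the $\tfrac{1}{n+1}$ lower bound on $\pi_R(\mathrm{root})$.
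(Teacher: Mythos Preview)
Your proof is correct and follows essentially the same approach as the paper: expressing $\alpha_R^{-1}$ as a level-by-level sum $\sum_{i=0}^{n}\ell_i\,2^{n-i}$, bounding each $\ell_i$ by $2^i$ (the perfect binary tree case), and then reading off $\pi_R(\mathrm{root})=\alpha_R\cdot 2^n\geq \tfrac{1}{n+1}$.
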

\begin{proof}
Let $r_i$ be the number of nodes in depth $i$.
\[1=\sum_{u\in S}\pi_R(u)=\sum_{i=0}^n\sum_{u\ in\ level\ i}\pi_R(u)=\sum_{i=0}^n r_i\alpha_R\cdot 2^{n-i}\] \[
\Rightarrow \frac{1}{\alpha_R}=\sum_{i=0}^n r_i\cdot 2^{n-i}\] which is maximized when the $r_i$'s are maximized, i.e. when the tree is perfect binary, in which case $r_i=2^i$ and $\alpha_R^{-1}=(n+1)2^n.$ This also implies that for the root of  $R$ it holds $\pi_R(root)=\alpha_R\cdot 2^n\geq \frac{1}{n+1}.$ 
\end{proof}

Now we will reduce the computation of the size of $S$ to the computation of the normalizing factors of the above probability distributions $(\pi_S)_{_S}.$

\begin{proposition} \label{sizeOfS}
Let $S$ be a binary tree of height $n$, and $\forall i=0...n,$ let $S_i$ be the subtree of  $S$ that contains all nodes up to depth $i$, and let $\alpha_{S_i}$ be the corresponding normalizing factors defined as above. Then  
\[|S|=\frac{1}{\alpha_{S_n}}-\sum_{k=0}^{n-1}\frac{1}{\alpha_{S_k}}\]
\end{proposition}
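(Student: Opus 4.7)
The plan is to reduce the identity to an algebraic manipulation of level counts, by applying the formula for $\alpha_R^{-1}$ already derived inside the proof of Lemma \ref{propertiesOfP}. Let $r_i$ denote the number of nodes of $S$ at depth $i$, for $i=0,\dots,n$. Since $S_k$ is the subtree of $S$ consisting of all nodes up to depth $k$, the number of nodes of $S_k$ at depth $i$ is exactly $r_i$ for each $i\leq k$, and the height of $S_k$ is $k$.

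Applying the computation from Lemma \ref{propertiesOfP} to each $S_k$ (so that $n$ in Definition \ref{def-PS} is replaced by $k$) yields
\[
\frac{1}{\alpha_{S_k}}=\sum_{i=0}^{k} r_i\cdot 2^{k-i},\qquad k=0,1,\dots,n.
\]
Substituting these into the right-hand side of the claim, I would swap the order of summation in the double sum:
\[
\sum_{k=0}^{n-1}\frac{1}{\alpha_{S_k}}=\sum_{k=0}^{n-1}\sum_{i=0}^{k}r_i\cdot 2^{k-i}=\sum_{i=0}^{n-1}r_i\sum_{k=i}^{n-1}2^{k-i}=\sum_{i=0}^{n-1}r_i\bigl(2^{n-i}-1\bigr),
\]
using the geometric sum $\sum_{k=i}^{n-1}2^{k-i}=2^{n-i}-1$.

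Subtracting from $\alpha_{S_n}^{-1}=\sum_{i=0}^{n}r_i\cdot 2^{n-i}$, the terms $r_i\cdot 2^{n-i}$ cancel for $i=0,\dots,n-1$, leaving $r_n+\sum_{i=0}^{n-1}r_i=\sum_{i=0}^{n}r_i=|S|$, which is the desired identity. There is no real obstacle here; the only thing to watch out for is keeping the two different heights straight (the height of $S_k$ is $k$, not $n$), so that the exponent in the stationary probability is $2^{k-i}$ and not $2^{n-i}$ when one works with $\pi_{S_k}$.
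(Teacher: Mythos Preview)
Your proof is correct. Both you and the paper start from the same identity $\alpha_{S_k}^{-1}=\sum_{i=0}^{k} r_i\,2^{k-i}$, but you take a different route from there. The paper first proves by induction the per-level recurrence $r_k=\alpha_{S_k}^{-1}-2\,\alpha_{S_{k-1}}^{-1}$ and then sums (telescopes) these over $k$; you instead substitute all the $\alpha_{S_k}^{-1}$ at once, interchange the order of summation, and collapse the inner geometric series. Your argument is shorter and avoids induction; the paper's version has the side benefit of isolating the intermediate identity $r_k=\alpha_{S_k}^{-1}-2\,\alpha_{S_{k-1}}^{-1}$, which says that each level count can be recovered from two consecutive normalizing factors. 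One small implicit point (used in both arguments) is that $S_k$ actually has height exactly $k$: since $S$ has height $n$, every level $0\leq k\leq n$ is nonempty, so the exponent in Definition~\ref{def-PS} applied to $S_k$ really is $2^{k-i}$, as you note.
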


\begin{proof}
For $i=1,...,n$ let $r_i$ be the number of nodes in depth $i$. So 
$|S|=r_0+...+r_n.$

Obviously if $S$ is not empty,
\begin{equation}
r_0=1=\frac{1}{\alpha_{S_0}}.\label{r0}
\end{equation}

We will prove that $\forall k=1...n$
\begin{equation}\label{rk} r_k=\frac{1}{\alpha_{S_k}}-2\frac{1}{\alpha_{S_{k-1}}},
\end{equation}
 so then $|S|=\frac{1}{\alpha_{S_0}}+\sum_{k=1}^n(\frac{1}{\alpha_{S_k}}-2\frac{1}{\alpha_{S_{k-1}}})=$ \\$\frac{1}{\alpha_{S_n}}-\sum_{k=0}^{n-1}\frac{1}{\alpha_{S_k}}.$
 
We will prove claim (\ref{rk}) by induction.

For $k=1$ we have 
\[\sum_{u\in S_{1}}\pi_{S_{1}}(u)=1\Rightarrow
\alpha_{S_1}\cdot r_1+2 \alpha_{S_1}\cdot r_0=1\Rightarrow\]
\[r_1=\frac{1}{\alpha_{S_1}}-2 r_0=\frac{1}{\alpha_{S_1}}-2\frac{1}{\alpha_{S_0}}.\]

Suppose claim (\ref{rk}) holds for $k<i\leq n.$ We will prove it holds for $k=i.$

\[\sum_{u\in S_{i}}\pi_{S_{i}}(u)=1\Rightarrow
\sum_{k=0}^i 2^{i-k} \alpha_{S_i}\cdot r_k =1 \Rightarrow\]
\[r_i=\frac{1}{\alpha_{S_i}}-\sum_{k=0}^{i-1}2^{i-k} r_k\]
and substituting $r_k$ for $k=0,...,i-1$ by (\ref{r0}) and (\ref{rk}), we get
$r_i=\frac{1}{\alpha_{S_i}}-2\frac{1}{\alpha_{S_{i-1}}}.$ 
\end{proof}

Now we give an FPRAS for the computation of the normalizing factor $\alpha_S,$ using the previously defined Markov chain $P_S$ .

\begin{proposition}\label{estimOfaR}
For any binary tree $R$ of height $n$ we can estimate $\alpha_R$, within $(1\pm\zeta)$ for any $\zeta>0$, with probability $1-\delta$ for any $\delta>0$, in time $poly(n,\zeta^{-1},\log\delta^{-1})$.
\end{proposition}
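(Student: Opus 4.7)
The plan is to estimate $\alpha_R$ by exploiting the identity $\pi_R(\text{root}) = \alpha_R \cdot 2^{n}$ coming from Definition \ref{def-PS} (the root is at depth $0$). Consequently, a multiplicative $(1\pm\zeta)$ estimate of $\pi_R(\text{root})$ immediately yields a $(1\pm\zeta)$ estimate of $\alpha_R$ after dividing by the known quantity $2^{n}$. Since $\pi_R(\text{root}) \geq \frac{1}{n+1}$ by Lemma \ref{propertiesOfP}, the root is charged with non-negligible mass and can be estimated efficiently by Monte Carlo sampling.

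First I would run the Markov chain $P_R$ of Definition \ref{the_chain}, starting from the root, for $\tau$ steps, where $\tau$ is chosen according to Proposition \ref{mixing time} so that the output distribution is within total variation distance $\epsilon$ from $\pi_R$. Taking $\epsilon = \Theta(\zeta/(n+1))$ suffices and keeps $\tau = \text{poly}(n, \log \zeta^{-1})$. I would then draw $N$ such samples by restarting independent copies of the chain (or by taking samples spaced $\tau$ apart after an initial burn-in). The estimator is $\hat{p} := \frac{1}{N}\sum_{k=1}^{N} \mathbf{1}[X_k = \text{root}]$, and the estimate for $\alpha_R$ is $\hat{\alpha}_R := \hat{p}/2^{n}$.

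To bound the required sample size, I would apply a Chernoff bound: since the true mean of each $\mathbf{1}[X_k = \text{root}]$ differs from $\pi_R(\text{root})$ by at most $\epsilon$, and $\pi_R(\text{root}) \geq 1/(n+1)$, taking $N = O((n+1)\zeta^{-2}\log \delta^{-1})$ makes $|\hat{p} - \pi_R(\text{root})| \leq \frac{\zeta}{2}\pi_R(\text{root})$ with probability at least $1-\delta$. Combined with the $\epsilon$ bias from imperfect mixing, which is absorbed in the other $\zeta/2$ slack by the choice of $\epsilon$, this gives the desired $(1\pm\zeta)$ multiplicative estimate of $\pi_R(\text{root})$, hence of $\alpha_R$. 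The total running time is $N\cdot\tau = \text{poly}(n,\zeta^{-1},\log\delta^{-1})$.

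The only delicate point is the interaction between the bias introduced by finite mixing and the multiplicative error goal: because $\pi_R(\text{root})$ can be as small as $\Theta(1/n)$, an additive TV-distance bound of $\epsilon$ translates to a multiplicative error of order $\epsilon n$, so $\epsilon$ must be chosen $\Theta(\zeta/n)$ rather than $\Theta(\zeta)$. This costs only an additional logarithmic factor in the mixing time thanks to the logarithmic dependence of $\tau_x(\epsilon)$ on $\epsilon^{-1}$ in the conductance bound used in Proposition \ref{mixing time}, so the overall complexity remains polynomial as claimed.
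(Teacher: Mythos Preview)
Your proposal is correct and follows essentially the same approach as the paper: use the identity $\pi_R(\text{root})=2^n\alpha_R$ together with the lower bound $\pi_R(\text{root})\geq 1/(n+1)$ from Lemma~\ref{propertiesOfP}, sample from $\pi_R$ via the rapidly mixing chain $P_R$, estimate the root's frequency, and divide by $2^n$. The only cosmetic difference is that the paper invokes a variance bound plus the median trick to reach confidence $1-\delta$, whereas you apply a Chernoff bound directly; both yield the same $\text{poly}(n,\zeta^{-1},\log\delta^{-1})$ running time, and your explicit handling of the mixing bias (taking $\epsilon=\Theta(\zeta/n)$) matches what the paper absorbs informally into $\zeta$.
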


\begin{proof}
Let $R$ be a binary tree of height $n$. We can estimate $\alpha_R$ as follows.

As we saw, $\pi_R(root)=2^n \alpha_R$, and we observe that this is always at least $\frac{1}{n+1}$ (which is the case when $R$ is full binary). So we can estimate $\pi_R(root)$ within  $(1\pm\zeta)$ for any $\zeta>0$, by sampling $m$ nodes of $R$ according to $\pi_R$ and taking, as estimate, the fraction $\hat{p}=\sum_{i=1}^m\frac{1}{m}X_{i}$, where $X_i=1$ if the $i$-th sample node was the root, else $X_i=0.$

It is known by standard variance analysis arguments that we need $m=O(\pi_R(root)\cdot \zeta^{-2})=poly(n,\zeta^{-1})$ to get \[\Pr [(1-\zeta)\pi_R(root)\leq \hat{p}\leq (1+\zeta)\pi_R(root)]\geq\frac{3}{4}\]

We can boost up this probability to $1-\delta$ for any $\delta>0$, by repeating the above sampling procedure $t=O(\log\delta^{-1})$ times, and taking as final estimate the median of the $t$ estimates computed each time.

(Proofs for the above arguments are elementary in courses on probabilistic algorithms or statistics, see e.g. in \cite{Snotes} the unbiased estimator theorem and the median trick, for detailed proofs.)

The random sampling  according to $\pi_R$ can be performed by running the Markov chain defined earlier, on the nodes of $R$. Observe that the deviation $\epsilon$ from the stationary distribution can be negligible and be absorbed into $\zeta$, with only a polynomial increase in the running time of the Markov chain.

Finally, the estimate for $\alpha_R$ is $\hat{\alpha_R}=2^{-n}\hat{p}$, and it holds
\[\Pr[(1-\zeta)\alpha_R\leq \hat{\alpha_R}\leq (1+\zeta)\alpha_R]\geq 1-\delta.\]   
\end{proof}

Finally we show how the above propositions yield a probabilistic additive approximation to the problem Size-of-Subtree, although it could be also obtained by a simple random sampling process that chooses $m=poly(n)$ nodes of the full binary tree $T$ of height $n$ uniformly at random, and taking as estimate of the size of $S,$ the proportion of those $m$ samples that belong to $S.$ This is an application of the general method of \cite{Goldreich08} chapter 6.2.2.  The significance of our alternative method is related to the CAPE problem (def \ref{additive definition}), and we discuss it in section \ref{discussion}. 

\begin{definition}\label{additive definition}
We call \textbf{additive approximation} to a probability $p$, a number $\hat{p}=p \pm \xi,$ for some $\xi\in (0,1).$ In the case of Size-of-Subtree  the quantity under consideration is $p\equiv|S|/2^n.$ In the case of the Circuit Acceptance Probability Estimation problem (CAPE)\cite{Will13} the quantity under consideration is $p\equiv \#\text{sat. assignments} /2^n,$ where $n$ is the number of input gates of the given circuit. 
\end{definition}

\begin{proposition}\label{main}
For all $\xi>0,\delta>0$ we can get an estimate $|\hat{S}|$ of $|S|$ in time $poly(n,\xi^{-1},\log\delta^{-1})$ s.t. \[\Pr[|S|-\xi 2^n\leq |\hat{S}|\leq |S|+\xi 2^n]\geq1-\delta\]
\end{proposition}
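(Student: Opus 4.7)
The plan is to combine Proposition \ref{sizeOfS}, which expresses $|S|$ as a signed sum of the reciprocals $1/\alpha_{S_k}$, with the FPRAS for each $\alpha_{S_k}$ given by Proposition \ref{estimOfaR}. For every $k=0,\dots,n$ I would compute an estimate $\hat{\alpha}_{S_k}$ of $\alpha_{S_k}$ using the Markov chain $P_{S_k}$, with multiplicative accuracy $\zeta$ and failure probability $\delta'$ to be chosen, and then return
\[
\widehat{|S|} = \frac{1}{\hat{\alpha}_{S_n}} - \sum_{k=0}^{n-1} \frac{1}{\hat{\alpha}_{S_k}}.
\]
The first task is to argue that each $S_k$ is given in a succinct representation derivable from that of $S$ in polynomial time (simply restrict the predicate $R_S$ to depth $\le k$), so that Proposition \ref{estimOfaR} indeed applies to each $S_k$.

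Next comes the error analysis, which is the main point to check. If $\hat{\alpha}_{S_k} \in (1\pm \zeta)\alpha_{S_k}$ then $1/\hat{\alpha}_{S_k}$ deviates from $1/\alpha_{S_k}$ by at most $\frac{\zeta}{1-\zeta}\cdot \frac{1}{\alpha_{S_k}}$, and by lemma \ref{propertiesOfP} we have $1/\alpha_{S_k}\le (k+1)2^k$. A triangle inequality over the $n+1$ terms of the signed sum therefore gives
\[
\bigl|\widehat{|S|} - |S|\bigr| \;\le\; \frac{\zeta}{1-\zeta}\sum_{k=0}^{n}(k+1)2^k \;=\; O\!\left(\zeta\, n\, 2^n\right).
\]
To make the right-hand side at most $\xi\cdot 2^n$ it suffices to choose $\zeta = \Theta(\xi / n)$, which is still $1/\mathrm{poly}(n,\xi^{-1})$.

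For the probability bound I would invoke each call to the estimator of Proposition \ref{estimOfaR} with failure probability $\delta' = \delta/(n+1)$; a union bound over the $n+1$ calls then guarantees that every $\hat{\alpha}_{S_k}$ is within the desired relative accuracy, and hence $\widehat{|S|}$ is within the desired additive error, with probability at least $1-\delta$. The total running time is $(n+1)$ times $\mathrm{poly}(n,\zeta^{-1},\log(\delta')^{-1}) = \mathrm{poly}(n,\xi^{-1},\log\delta^{-1})$, as required.

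The only subtle step is the accumulated error argument: one must notice that although the individual reciprocals $1/\alpha_{S_k}$ can be as large as $(n+1)2^n$ while $|S|$ itself may be much smaller (so the signed sum enjoys massive cancellation), a uniformly small \emph{relative} error on each $1/\alpha_{S_k}$ translates, after summation, into an \emph{additive} error that is bounded by $O(\zeta n 2^n)$ and is independent of how much cancellation there is — which is precisely why the additive approximation stated in Definition \ref{additive definition} (and not a multiplicative one) is what naturally comes out of this reduction. Once the parameters $\zeta$ and $\delta'$ are set as above, the claim follows.
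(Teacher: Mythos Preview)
Your proposal is correct and follows essentially the same approach as the paper: estimate each $\alpha_{S_k}$ via Proposition~\ref{estimOfaR}, plug the reciprocals into the formula of Proposition~\ref{sizeOfS}, and control the accumulated error using the bound $1/\alpha_{S_k}\le (k+1)2^k$ from Lemma~\ref{propertiesOfP}. The only cosmetic difference is in the error bookkeeping---the paper groups the sum into $A=1/\alpha_{S_n}$ and $B=\sum_{k<n}1/\alpha_{S_k}$ and uses $B\le A\le (n+1)2^n$, whereas you apply the triangle inequality term by term; both routes yield the same choice $\zeta=\Theta(\xi/n)$ and the same conclusion.
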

\begin{proof}
Let $\zeta=\frac{\xi}{2(n+1)}$ and $\epsilon=\frac{\zeta}{1+\zeta}$, thus 
$poly(\epsilon^{-1})=poly(\zeta^{-1})$
$=poly(n,\xi^{-1})$. 

So according to proposition \ref{estimOfaR} we have in time $poly(n,\xi^{-1},\log\delta^{-1})$ estimations $\forall i=1,...,n$
\begin{equation}\label{est-aSi} (1-\epsilon)\alpha_{S_i} \leq\hat{\alpha}_{S_i} \leq (1+\epsilon) \alpha_{S_i}.
\end{equation}

We will use proposition \ref{sizeOfS}. Let $A=\frac{1}{\alpha_{S_n}}$ and $B=\sum_{k=0}^{n-1}\frac{1}{\alpha_{S_k}},$ so $|S|=A-B$, and clearly $B\leq A.$

From (\ref{est-aSi}) we have 
$\frac{1}{1+\epsilon}A\leq \hat{A} \leq \frac{1}{1-\epsilon} \Leftrightarrow$
$(1-\zeta)A\leq \hat{A} \leq (1+\zeta) A$ and similarly
$(1-\zeta)B\leq \hat{B} \leq (1+\zeta) B.$ 

Thus 
$(1-\zeta)A-(1+\zeta)B\leq \hat{A}-\hat{B} \leq (1+\zeta)A-(1-\zeta)B \Leftrightarrow$

$A-B-\zeta (A+B) \leq \hat{A}-\hat{B} \leq A-B+\zeta (A+B),$ and since $A\geq B$, we have 

$|S|-2\zeta A\leq |\hat{S}| \leq |S|+2 \zeta A.$ And since from lemma \ref{propertiesOfP} the maximum $A$ is $2^n(n+1)$, we have 

$|S|-2\zeta(n+1)2^n \leq |\hat{S}| \leq|S|+ 2\zeta (n+1) 2^n \Leftrightarrow$

$|S|-\xi \cdot 2^n \leq |\hat{S}|\leq |S|+ \xi \cdot 2^n.$ 

\end{proof}

\begin{corollary}\label{pr-estimation}
Let $p=\frac{|S|}{2^n}.$ For all $\xi>0,\delta>0$ we can get an estimation  $\hat{p}$ in time $poly(n,\xi^{-1},\log\delta^{-1})$ s.t.
\[\Pr[p-\xi\leq\hat{p}\leq p+\xi]\geq1-\delta\] 
\end{corollary}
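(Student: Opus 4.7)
The plan is to obtain this corollary as a direct rescaling of Proposition \ref{main}. Given target parameters $\xi, \delta > 0$, I would invoke Proposition \ref{main} with exactly the same $\xi, \delta$ to produce, in time $\mathrm{poly}(n, \xi^{-1}, \log \delta^{-1})$, a random quantity $|\hat{S}|$ satisfying
\[
\Pr\bigl[\, |S| - \xi \cdot 2^n \;\leq\; |\hat{S}| \;\leq\; |S| + \xi \cdot 2^n \,\bigr] \;\geq\; 1 - \delta.
\]

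I would then define $\hat{p} := |\hat{S}|/2^n$. Since $2^n$ is a deterministic positive constant (fixed by the input), dividing each term inside the event by $2^n$ does not change the event or its probability, so
\[
\Pr\bigl[\, p - \xi \;\leq\; \hat{p} \;\leq\; p + \xi \,\bigr] \;\geq\; 1 - \delta,
\]
where $p = |S|/2^n$ by definition. The running time remains $\mathrm{poly}(n, \xi^{-1}, \log \delta^{-1})$, because forming $\hat{p}$ from $|\hat{S}|$ only requires adjusting an $O(n)$-bit binary exponent (or storing the scaling factor symbolically), which is polynomial.

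There is essentially no obstacle here: the corollary is a rewriting of Proposition \ref{main} in the normalised coordinate $p = |S|/2^n$, which is precisely the form in which the CAPE problem is stated in Definition \ref{additive definition}. The only thing worth being mildly careful about is that the additive slack $\xi$ in the statement for $\hat{p}$ corresponds to a slack of $\xi \cdot 2^n$ in the statement for $|\hat{S}|$, so one must feed the same $\xi$ (not some rescaled version) into Proposition \ref{main}; the $n$-dependent factor already lives inside that proposition and needs no further adjustment.
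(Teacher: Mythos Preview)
Your proposal is correct and matches the paper's intent exactly: the corollary is stated immediately after Proposition~\ref{main} with no separate proof, precisely because it is obtained by dividing the event in that proposition by $2^n$ and setting $\hat{p} = |\hat{S}|/2^n$. There is nothing to add.
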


\paragraph{Proof of theorem \ref{main-theorem}}

\begin{proof}
1. Follows from propositions \ref{stationary} and \ref{mixing time}. 

2. Follows from theorem \ref{main-theorem}.1.

3a.  Follows from proposition \ref{sizeOfS}, combined with the fact that Size-of-Subtree is TotP-complete \cite{BCPPZ17}.

3b.  Follows from proposition \ref{estimOfaR}. 

3c.  \#IS $\in$ TotP. If NP$\neq$P (respectively NP$\neq$RP) then \#IS does not admit FPTAS (respectively FPRAS) \cite{SinclairNotesMC}. Thus from 3a of theorem 1, the same holds  for the computation of the normalizing factors for the family $(\pi_S)_{_S}.$

4a.  Follows from the TotP-completeness of Size-of-Subtree under parsimonious reductions  \cite{BCPPZ17}. From definition \ref{def-PS},  a positive probability is given to every node of the corresponding input tree $S$, so the size of the support equals the size of the tree.

4d.  By same arguments as for 3c.

4e.  By same arguments as for 3c, for the one direction. For the other direction, if NP=P then \#P admits FPTAS (\cite{AB09}, ch. 17.3.2). If NP=RP then \#P admits FPRAS (theorem \ref{theorem2}).

4c and 4f.  come from corollary \ref{pr-estimation}. 

4b.  follows from proposition \ref{sizeOfS}.
\end{proof}

\section{Discussion}\label{discussion}

We have studied some relationships between counting complexity and \textit{non-uniform} probability distributions. We also studied the \textit{complexity} of some computational tasks related to such distributions. Similar relationships had not been studied before. Some exceptions concern complexity results for individual problems, that do not generalize to a whole class, (see the last paragraph in section \ref{related work}). Our results \textit{ generalize} to all problems in TotP.

We have considered three computational tasks related to any probability distribution: sampling, computing the normalizing factor, and computing the size of the support.  
For the uniform distribution, these three tasks are equivalent \cite{JS89}.
However, for a general distribution, it is not only unknown whether these tasks are solvable in polynomial time; it is even unclear whether these three tasks are equivalent. 

For the family of distributions we studied here (definition \ref{def-PS}), it turns out that the three tasks \textit{ are not all equivalent}, unless NP=RP. First of all counting coincides with computing the size of the support (fact \ref{4a}). Then we showed that sampling is in polynomial time, and that sampling yields an FPRAS for the normalizing factors. So sampling and FPRAS for the normalizing factor are always equivalent, since both are  possible in polynomial time. Also existence of FPRAS for counting, implies existence of FPRAS for the normalizing factor, since the second is always true.  For the opposite direction of the last fact, our results, combined with the fact that FPRAS for TotP is equivalent to NP=RP, imply that: 

\begin{corollary}(FPRAS for the normalizing factors $\Rightarrow$ FPRAS for counting) \textbf{iff }(NP=RP).
\end{corollary}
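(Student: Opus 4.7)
The plan is to exploit the fact, already established in Proposition \ref{estimOfaR}, that an FPRAS for the normalizing factors $\alpha_S$ exists \emph{unconditionally}. This makes the antecedent of the implication ``FPRAS for normalizing factors $\Rightarrow$ FPRAS for counting'' a theorem of the paper, not a hypothesis. Consequently the implication collapses to just the assertion of its conclusion, namely ``FPRAS exists for counting Size-of-Subtree,'' and since Size-of-Subtree is TotP-complete under parsimonious reductions \cite{BCPPZ17}, this in turn is equivalent to ``FPRAS exists for every problem in TotP.''

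For the forward direction, I would assume the implication holds. Because its antecedent is true by Proposition \ref{estimOfaR}, modus ponens gives an FPRAS for Size-of-Subtree. By parsimonious TotP-completeness, this extends to an FPRAS for every problem in TotP. Combined with the equivalence between FPRAS for TotP and NP $=$ RP (from \cite{DFJ02} for one direction and Theorem \ref{theorem2} for the other, as recalled in the introduction), this yields NP $=$ RP.

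For the backward direction, I would assume NP $=$ RP. Then, by the same equivalence, every problem in TotP admits an FPRAS; in particular Size-of-Subtree (that is, counting) does. Hence the consequent of the implication is true, so the implication holds trivially irrespective of its antecedent.

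Honestly, there is no substantive obstacle here: the corollary is a formal consequence of two facts already in the paper, namely the unconditional FPRAS of Proposition \ref{estimOfaR} and the equivalence between approximate counting in TotP and NP $=$ RP. The only thing to be careful about is to state clearly that, because the antecedent of the implication is \emph{always} true, the implication is logically equivalent to its consequent, which is what converts a conditional statement about two approximation tasks into an unconditional characterization of NP $=$ RP.
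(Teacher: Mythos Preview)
Your proposal is correct and matches the paper's own reasoning: the corollary is stated in the Discussion section precisely as an immediate consequence of Proposition~\ref{estimOfaR} (the antecedent is unconditionally true) together with the equivalence between FPRAS for TotP and NP $=$ RP, so the implication collapses to its consequent. The paper does not give a separate formal proof beyond that one-sentence observation, and your write-up simply spells it out.
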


We also showed that exact counting reduces to exactly computing the normalizing factor (fact \ref{4b}), but not under approximation preserving reductions. The previous arguments imply that such an \textit{ approximation preserving reduction, between the two tasks, exists if and only if NP=RP}.

We now turn to another issue.   Since the Size-of-Subtree problem is TotP complete, under parsimonious reductions, our results generalize to any problem in TotP. However if P $\neq$ NP, we can't derive such a generalization for  \#P. As we mentioned in the introduction, it is not even clear how to construct a Markov chain, with an underlying graph that connects the set of solutions of a \#P- complete problem. Besides we can't decide if a solution exists.  \textit{Our results demonstrate the essence of two properties of TotP: easy decision, and connectedness  of the set of solutions}.

Finally, note that additive error approximation for any problem in \#P can be achieved in a simple way (\cite{Goldreich08}, chapter 6.2.2). Thus the same method works for TotP, too.  The last fact \ref{4last} provides an alternative way for achieving additive error approximation for problems in TotP. This alternative method is restricted to TotP, and does not generalize to \#P. 

A positive side of this restriction is relevant to \textit{derandomization} issues. It is known that derandomizing the general simple method is as difficult as proving circuit lower bounds \cite{Will13}. However we don't know a similar relationship between circuit lower bounds and deterministic additive approximation, restricted to problems in TotP. Since our method does not generalize to \#P, it might be easier to derandomize it. We discuss this in more detail in the "further research" section.

\subsection{Further research}
Several results of this paper point to new research directions, towards the study of important  open problems in complexity theory.

It is interesting to investigate the following: (a) probabilistically exactly  computing  the normalizing factor, (b) an approximation preserving reduction from the problem of computing the size of the support to the problem of computing the normalizing factor, (c) computing the size of the support for this family of distributions with FPRAS in some completely different way. A solution to any of them implies NP=RP. A negative proof for (b) or (c) implies NP $\neq$ RP.

 Note that the algorithms and proof arguments presented here, do not take into account the fact that the tree is given in succinct representation. It might be easier to show unconditional negative results for the above open questions, if we suppose $S$ is any possible binary tree. However, in order to derive a proof of NP $\neq$ RP, the arguments should apply to the family of distributions corresponding to trees with succinct representation.

Another  open problem is how to derandomize the additive error approximation algorithms for  the size of the support, in subexponential time. This would yield a deterministic solution of the CAPE problem (see def. \ref{additive definition}), within additive approximation, for families of circuits, for which  counting the number of accepting inputs is in TotP, (we will call it TotP-CAPE.)

Note that the best (exact, and additive error) deterministic algorithm known for CAPE, on an arbitrary circuit, is by exhaustive search. Derandomizing it faster than the exhaustive search algorithm, i.e.  even in time $2^{\gamma n}poly(n)$  for some $\gamma<1$,  would yield NEXP $\nsubseteq$ P/poly \cite{Will13}. The latter is a long standing  conjecture. 

We don't know similar relationships between circuit lower bounds, and TotP-CAPE. This is another open problem. Nevertheless, solving deterministically TotP-CAPE in subexponential time, would be interesting on its own.   

A final open problem, is whether we can achieve derandomization of the same task in polynomial time.  Such a result would also imply a solution to the CAPE problem in deterministic polynomial time for depth-two $AC^0$ circuits (i.e. DNF's and CNF's). The best deterministic algorithm known until now is of time $n^{O(\log \log n)}$ \cite{GMR13}. (For more on $AC^0$-CAPE, see the survey \cite{Williams14} p.13, and \cite{LV96} for an older result.)

\section{Related work}\label{related work}

TotP is defined in \cite{KPSZ01}, some of its properties and relationships to other classes are studied in \cite{PZ06,
BGPT17}, and completeness is studied in \cite{BCPPZ17}.

Regardint the Size-of-Subtree: In \cite{K74} Knuth provides a probabilistic algorithm practically usefull, but with an exponential worst case error. Modifications and extensions of Knuth's algorithm have been presented and experimentaly tested in \cite{Purdom78, Chen92, Kilby06}, without signifcant improvements for worst case instances. There are also many heuristics and experimental results on the problem ristricted to special backtracking algorithms, or special instances, see e.g. \cite{Belov17} for more references. Surprisingly there exist FRAS's for random models of the problem \cite{Furer04, Vaisman17}. In \cite{Ambainis17} quantum algorithms for  the problem are studied. In \cite{Stockmeyer85a} Stockmeyer provided unconditional lower bounds for the problem under a model of computation which is different from the Turing machine, namely a variant of the (non-uniform) decision tree model.

The relationship between approximate counting and uniform sampling has been studied in \cite{JS89}.

There are numerous papers regarding algorithmic and hardness results for individual problems in \#P and TotP, e.g. \cite{Valiant79,JS96perm,KLM89,DFJ02,Wei06}. However, apart from the backtracking-tree problem,  other TotP-complete problems have not been studied algorithmically yet.

Some non-uniform probability measures have already been studied in other  areas of computer science, where problems concern the computation of  a weighted sum over the set of solutions to a combinatorial problem. E.g. computing the partition function of the hard-core model and the Potts-model from statistical physics. There are Markov chains associated to these problems, that converge to non-uniform distributions in general. E.g. Glauber dynamics, Gibbs sampling \cite{Gibbs}, Metropolis-Hastings algorithm \cite{Metropolis,Hastings} etc. However, for the special cases where weights are in $\{0,1\}$, the problems correspond to  conventional counting problems in \#P, and the associated stationary distributions are uniform, again. The literature on this areas is enormous, e.g. \cite{BST10,BKZZ13,BW02,So0,WM0,W82}.

\section{Conclusions} 
We presented some non-uniform probability measures that can be related to counting in TotP. We showed that both computing the size of their supports, and computing their normalizing factors are equivalent to counting. 

For these probability measures we proved that, the tasks of sampling, approximately computing the normalizing factor with FPRAS, and  approximating the size of the support with an additive error, can be performed in randomized polynomial time.

We also showed that an exact computation of the normalizing factor, and a multiplicative error approximation of the size of the support, are hard problems if NP $\neq$ RP.

Such relationships between counting complexity and \textit{non-uniform} probability measures had not been studied before. 

Our results apply to the whole complexity class TotP. Similar results are not possible for \#P, if P $\neq$ NP. Our algorithmic results \textit{demonstrate the importance of two main properties of TotP; easy decision and connectedness of the set of solutions}.

Our results also suggest new research directions towards the study of other open problems in complexity theory.

\section*{Appendix}
We don't know if the following is folklore, but since we haven't seen it stated explicitly anywhere, we give a proof sketch, for the sake of completeness of this paper.

\begin{theorem}\label{theorem2}
If NP=RP then all problems in \#P admit an FPRAS.
\end{theorem}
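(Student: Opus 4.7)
The plan is to combine Stockmeyer's approximate counting theorem with the hypothesis NP $=$ RP. Stockmeyer's theorem asserts that for every $f \in \#\mathrm{P}$ there is a randomized polynomial-time algorithm equipped with an NP oracle that, on input $(x, 1^k)$, outputs a value in $[(1 - 1/k)\,f(x),\, (1 + 1/k)\,f(x)]$ with probability at least $2/3$; in other words, every function in $\#\mathrm{P}$ admits an FPRAS relative to an NP oracle. This is the non-trivial ingredient, and I would cite it rather than reprove it; its proof proceeds by iterating Sipser--style universal hashing to squeeze $f(x)$ between powers of $2$, using the NP oracle to test each hashing bucket for non-emptiness.

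Granting Stockmeyer's theorem, the hypothesis NP $=$ RP immediately gives NP $\subseteq$ RP $\subseteq$ BPP, so every NP oracle query made by the approximate-counting algorithm can be answered by a BPP subroutine. The standard closure property $\mathrm{BPP}^{\mathrm{BPP}} = \mathrm{BPP}$ then upgrades the FPRAS relative to an NP oracle into an ordinary FPRAS, which is exactly the conclusion of the theorem. The final algorithm simulates each oracle call by the corresponding RP machine and inherits the $(1 \pm 1/k)$ multiplicative guarantee of Stockmeyer's algorithm.

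The only delicate point, and the step I would write out carefully, is the routine error bookkeeping in the BPP simulation of the NP oracle. If the outer algorithm makes at most $q(n)$ oracle queries for some polynomial $q$, then I would amplify each simulated query to failure probability at most $1/(3 q(n))$; this costs only a polynomial-factor slowdown by Chernoff bounds. A union bound then ensures that with probability at least $2/3$ every oracle answer is correct, and conditional on this event the outer algorithm behaves identically to one with a genuine NP oracle. The hard part is therefore entirely encapsulated in Stockmeyer's hashing argument; the reduction from $\mathrm{FBPP}^{\mathrm{NP}}$ to $\mathrm{FBPP}$ under NP $=$ RP is a syntactic translation together with this standard amplification.
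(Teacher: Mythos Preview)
Your proof is correct and follows essentially the same approach as the paper's sketch: cite Stockmeyer's approximate-counting theorem and then use the hypothesis $\mathrm{NP}=\mathrm{RP}$ together with $\mathrm{BPP}^{\mathrm{BPP}}=\mathrm{BPP}$ to eliminate the oracle. The only cosmetic difference is that the paper invokes the $\Sigma_2^p$-oracle formulation of Stockmeyer and argues $\Sigma_2^p=\mathrm{RP}^{\mathrm{RP}}\subseteq\mathrm{BPP}$, whereas you go through the $\mathrm{BPP}^{\mathrm{NP}}$ formulation directly; your amplification and union-bound bookkeeping simply spells out what the paper leaves as ``easy to see.''
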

\begin{proof} (sketch)
In \cite{Stockmeyer85a} Stockmeyer has proven that an FPRAS, with access to a $\Sigma_2^p$ oracle, exists for any problem in \#P. If NP=RP then $\Sigma_2^p= RP^{RP}\subseteq BPP$. Finally it is easy to see that an FPRAS with access to a BPP oracle, can be replaced by another  FPRAS, that simulates the oracle calls itself.
\end{proof}

\section*{Acknowledgements}
I want to thank Stathis Zachos, Dimitris Fotakis, Aris Pagourtzis, Manolis Zampetakis and Gerasimos Palaiopanos for their useful comments and writing assistance.

\end{document}